\newtheorem{theorem}{{Theorem}}
\newtheorem{lemma}[theorem]{{Lemma}}
\theoremstyle{definition}
\newtheorem{definition}{Definition}
\newtheorem{remark}{{\textbf{Remark}}}
\newcommand{\prot}{$\mathrm{PROT^{PP}}$-$\mathrm{S}$-$\mathrm{Hist_{PP}}$ }
\newcommand{\protns}{$\mathrm{PROT^{PP}}$-$\mathrm{S}$-$\mathrm{Hist_{PP}}$}
\def\BibTeX{{\rm B\kern-.05em{\sc i\kern-.025em b}\kern-.08em
    T\kern-.1667em\lower.7ex\hbox{E}\kern-.125emX}}
\begin{document}

\IEEEoverridecommandlockouts
\title{A Locally Differential Private Coding-Assisted Succinct Histogram Protocol
% \thanks{This work was supported by the Center for Ubiquitous Connectivity (CUbiC), sponsored by Semiconductor Research Corporation (SRC) and Defense Advanced Research Projects Agency (DARPA) under the JUMP 2.0 program.}
}

%%%%%%
\author{%
  \IEEEauthorblockN{Hsuan-Po Liu\IEEEauthorrefmark{1} and Hessam Mahdavifar\IEEEauthorrefmark{1}\IEEEauthorrefmark{2}}
  \IEEEauthorblockA{\IEEEauthorrefmark{1}Department of Electrical Engineering and Computer Science, University of Michigan, Ann Arbor, MI 48109, USA}
  \IEEEauthorblockA{\IEEEauthorrefmark{2}Department of
Electrical and Computer Engineering, Northeastern University, Boston, MA
02115, USA}
  \IEEEauthorblockA{Emails: hsuanpo@umich.edu, h.mahdavifar@northeastern.edu}
}

\maketitle

\begin{abstract}

A succinct histogram captures frequent items and their frequencies across clients and has become increasingly important for large-scale, privacy-sensitive machine learning applications. To develop a rigorous framework to guarantee privacy for the succinct histogram problem, local differential privacy (LDP) has been utilized and shown promising results. 
To preserve data utility under LDP, which essentially works by intentionally adding noise to data, error-correcting codes naturally emerge as a promising tool for reliable information collection. 
This work presents the first practical $(\epsilon,\delta)$-LDP protocol for constructing succinct histograms using error-correcting codes. To this end, polar codes and their successive-cancellation list (SCL) decoding algorithms are leveraged as the underlying coding scheme. More specifically, our protocol introduces Gaussian-based perturbations to enable efficient soft decoding. Experiments demonstrate that our approach outperforms prior methods, particularly for items with low true frequencies, while maintaining similar frequency estimation accuracy.
\end{abstract}

\begin{IEEEkeywords}
Succinct histogram, local differential privacy, error-correcting codes.
\end{IEEEkeywords}

\section{Introduction}
\label{sec:intro}
% In the era of big data, collecting large-scale contextual information is crucial for statistical analysis, machine learning, and deriving actionable insights \cite{leskovec2020mining}. However, direct user data collection raises severe privacy concerns, as local devices often store sensitive personal information. Privacy-preserving techniques \cite{dwork2006calibrating}, such as perturbing shared data, enable secure analytics without exposing raw records.
In the era of big data, collecting large-scale contextual information is key to generating accurate statistics, training sophisticated machine learning models, and turning raw data into actionable insights, which has been well-studied for decades \cite{leskovec2020mining}. 
However, the state-of-the-art protocols that directly gather user details raise serious privacy concerns. Data on local devices often includes sensitive personal information, such as behavioral traits, usage patterns, or private preferences, which raises substantial privacy concerns when raw records are collected without protection. By adopting privacy-preserving techniques \cite{dwork2006calibrating}, such as perturbations on the information shared by the devices, we can enable powerful analytics and learning workflows that, otherwise, would be impractical or impossible under traditional data collection paradigms.

% Differential Privacy (DP) \cite{dwork2006calibrating, dwork2014algorithmic} offers a rigorous privacy guarantee by ensuring that any individual’s data has minimal impact on outputs. While centralized DP assumes a trusted aggregator, Local Differential Privacy (LDP) \cite{duchi2013local} eliminates this need, with each user perturbing data before transmission—essential for modern large-scale systems \cite{liu2023differentially}.
Differential Privacy (DP) \cite{dwork2006calibrating, dwork2014algorithmic} provides a mathematically rigorous framework for privacy-preserving data analysis. It ensures that the addition or removal of any single individual’s record has a negligible effect on the output of an algorithm, thereby limiting the risk of sensitive information leakage. 
In the centralized DP model, a trusted curator aggregates raw data and applies this noise before releasing outputs. However, such trustworthiness assumptions may not hold in many practical settings, simply due to the large scale of the systems and the volume of data that are collected from a large number of clients \cite{liu2023differentially}. Local DP (LDP) \cite{ duchi2013local}, on the other hand, eliminates the need for a trusted intermediary by shifting the perturbation step to each user’s device: individuals perturb their data locally before transmission.
Therefore, the LDP framework better meets the needs of modern large-scale data collection.

A \emph{succinct histogram} lists heavy hitters and estimates their frequencies \cite{bassily2015local}. Coding-based methods \cite{bassily2015local, bassily2020practical,bun2019heavy} encode client data with error-correcting codes, apply random perturbations to satisfy $\epsilon$-LDP, and decode at the server to identify heavy hitters. Early work \cite{bassily2015local} introduced a protocol, referred to as \prot, to solve the \emph{unique heavy hitter} problem, aiming to identify items held by at least an $\eta$ fraction of clients. Although \prot provided error bounds, it lacked practical code design considerations and operated on hard decisions. 
Later refinements in \cite{bassily2020practical,bun2019heavy} improved the theoretical error bounds, but remained untested in practice.

This paper proposes the first practical $(\epsilon,\delta)$-LDP protocol for succinct histograms that utilizes error-correcting codes, in particular, polar codes \cite{arikan2009channel} and their successive-cancellation list (SCL) decoder \cite{tal2015list}. By adopting the analytic Gaussian mechanism \cite{balle2018improving} for perturbation, the protocol enables efficient soft decoding. This is appealing from a practical perspective, indicating that existing mechanisms for reliable communication currently deployed, e.g., 5G protocols, can be re-used to guarantee privacy as well without incurring additional computational cost \cite{liu2024projective}. 
We provide theoretical bounds on the frequency estimation error and extend the design to the general succinct histogram setting \cite{bassily2015local}.
For comparison, we implement \prot with polar codes and maximum likelihood decoding. 
Experiments show that our protocol achieves a lower threshold on $\eta$, offering greater robustness in decoding errors while maintaining a comparable frequency estimation accuracy. The main contributions are: (i) the first practical $(\epsilon,\delta)$-LDP coding-assisted succinct histogram protocol; (ii) leveraging analytic Gaussian mechanism to enable soft decoding; and (iii) experimental validation demonstrating improved robustness over prior work.

% This paper addresses the gap by proposing a practical $(\epsilon,\delta)$-LDP protocol for succinct histograms, using polar codes \cite{arikan2009channel} and a successive-cancellation list (SCL) decoder \cite{tal2015list}. Leveraging analytic Gaussian mechanisms \cite{balle2018improving} for perturbation enables efficient soft decoding, reusing existing communication hardware \cite{liu2024projective} without additional computational cost. We prove frequency estimation error bounds and extend the design toward the general succinct histogram problem \cite{bassily2015local}.

% For benchmarking, we implement \prot using polar codes and maximum likelihood decoding. Experimental results show our protocol achieves a lower threshold $\eta$—offering better robustness in item decoding errors—while maintaining comparable frequency estimation accuracy.

% Major contributions: \begin{itemize} \item We introduce the first $(\epsilon,\delta)$-LDP coding-assisted succinct histogram protocol practical for real-world deployment. \item We replace previous discrete mechanisms with an analytic Gaussian mechanism, enabling soft decoding. \item Experiments demonstrate improved robustness in heavy hitter recovery with similar estimation accuracy compared to previous methods. \end{itemize}

The rest of the paper is structured as follows.
In Section~\ref{sec:prelim}, some preliminaries are provided. 
In Section~\ref{sec:protocol}, we present the proposed protocol with $(\epsilon,\delta)$-LDP guarantees, followed by the analysis of frequency estimation error.
The simulation results are included in Section~\ref{sec:sim}.
Finally, we conclude the paper in Section~\ref{sec:con}.

\section{Preliminaries}
\label{sec:prelim}
\subsection{Succinct histogram}
The succinct histogram problems analyze the most frequent items, the heavy hitters, from clients exceeding a certain frequency threshold.
We define a succinct histogram and the unique heavy hitter problem we consider in this paper. The details can be found in \cite{bassily2015local}.
\begin{definition}[Succinct Histogram]
A succinct histogram is a data structure that provides a list of frequent items called heavy hitters, together with their corresponding estimated frequencies among the clients.

\end{definition}
\begin{definition}[The unique heavy hitter problem]
\label{def:unique}
At least $\eta$ fraction of clients hold the same item as an $k$-bit binary vector $\mathbf{u}^*$ for some $\mathbf{u}^*\in\mathcal{U}=\{0,1\}^{k}$ unknown to the untrusted server, while other clients hold a special symbol $\bot$, to represent "no item."
The goal is to design an efficient $(\epsilon,\delta)$-LDP protocol of a succinct histogram, for $\eta$ as small as possible.
\end{definition}

\subsection{Local Differential Privacy}
The main idea to achieve $(\epsilon,\delta)$-LDP is to intentionally add perturbation via random noise sampled from a probability distribution at the client's side, before sharing information with the untrusted server. The definition of $(\epsilon,\delta)$-LDP is provided below. 
The details can be found in \cite{dwork2014algorithmic,duchi2013local}.
\begin{definition}[$(\epsilon,\delta)$-LDP]
\label{def:LDP}
Let $u$ and $u^\prime$ be neighboring datasets that differ only by a single event, i.e., $\mathrm{dist}(u,u^\prime)=1$. 
The neighboring datasets $u$ and $u^\prime$ satisfy $(\epsilon,\delta)$-LDP for any $\epsilon>0$ under a randomized mechanism $\mathcal{R}$ that under $\mathcal{E} \subseteq \mathrm{Range}(\mathcal{R})$,
\begin{equation}
\label{eq:LDP}
\mathbb{P}[\mathcal{R}(u)\in \mathcal{E}]\leq e^\epsilon\cdot\mathbb{P}[\mathcal{R}(u^\prime)\in \mathcal{E}]+\delta,
\end{equation}
where $\delta$ represents the failure probability.
\end{definition}

The sensitivity for a query function $q(\cdot)$ is the maximum difference between two queries.
\begin{definition}[Sensitivity]
\label{def:sen}
For two neighboring datasets $u$ and $u^\prime$ in an individual client together with a query function $q:\mathcal{D}\to\mathbb{R}$, the sensitivity is defined as follows:
\begin{equation}
\Delta \overset{\mathrm{def}}{=} \max_{\mathrm{dist}(u,u^\prime)=1}||q(u)-q(u^\prime)||_2.
\end{equation}
\end{definition}
The Gaussian mechanism is also defined as follows.
\begin{definition}[Gaussian mechanism]
\label{eq:gm}
Consider applying a query function $q$ on a dataset $u$. Then, the Gaussian mechanism $\mathcal{R}$ is defined as 
\begin{equation}
\mathcal{R}(u) \overset{\mathrm{def}}{=} q(u)+\mathcal{N}(0,\sigma^2),
\end{equation}
which adds random noise to the query result according to a zero-mean Gaussian distribution with variance $\sigma^2$. Note that $\sigma$ should be a function of $\epsilon,\delta$, and $\Delta$.
\end{definition}
Intuitively, $\sigma$ should be proportional to sensitivity $\Delta$. That is, a greater value of sensitivity for a dataset implies that we need a larger perturbation of the query, making it more indistinguishable from the adversary. We indicate the expression of $\sigma$ for the randomized mechanism in our proposed protocol in the next section.

\subsection{Polar Codes}
\paragraph{Coding structure}
An $(n,k)$-Polar code \cite{arikan2009channel} is a linear block code that has a code dimension of $k$ and the code length as $n=2^m$, where $m$ is the Kronecker power such that the generator matrix $\mathbf{G}$ follows the transformation $\mathbf{G}=\mathbf{G}_2^{\otimes m}$ given the base matrix $\mathbf{G}_2=\begin{bmatrix}
    1 & 0\\1 &1
\end{bmatrix}$.
We denote the binary information vector by $\mathbf{u}$ and the transmitted vector as a codeword corresponding to each information vector by $\mathbf{x}$, such that $\mathbf{x=uG}$, where $\mathbf{x}\in\{0,1\}^n$ and $\mathbf{u}\in\{0,1\}^k$.
The key idea behind polar codes is the phenomenon of channel polarization, where certain bit-channels become highly reliable while others become completely noisy as the code length increases.
A polar encoder exploits this polarization effect to transmit information bits through reliable bit-channels, while fixing the values of the less reliable ones to zeros, called frozen bits.

\paragraph{Decoder}
The SC decoder \cite{arikan2009channel} for polar codes operates by sequentially estimating each bit in the codeword using a recursive computation of log-likelihood ratios (LLRs).
To estimate the $j$th message bit ${u}_j$ in $\mathbf{u}$, the SC decoder makes the following decision based on the received vector $\mathbf{y}$ and past decisions $\hat{\mathbf{u}}_{1}^{j-1}=[\hat{u}_1,\dots,\hat{u}_{j-1}]$ as
\begin{equation}
\hat{u}_i = 
\begin{cases}
0 & \text{if } L^{(j)}\left( \mathbf{y}, \hat{\mathbf{u}}_{1}^{j-1} \right) \geq 1 \\
1 & \text{otherwise}
\end{cases}
\end{equation}
where $L^{(j)}\left( \mathbf{y}, \hat{\mathbf{u}}_{1}^{j-1} \right)  \overset{\mathrm{def}}{=} 
\frac{W^{(j)}\left( \mathbf{y}, \hat{\mathbf{u}}_{1}^{j-1} \mid 0 \right)}
{W^{(j)}\left( \mathbf{y}, \hat{\mathbf{u}}_{1}^{j-1} \mid1\right)}$ 
given that $W^{(j)}\left( \mathbf{y}, \hat{\mathbf{u}}_{1}^{j-1}| u_j\right)=\sum_{\mathbf{u}_{j+1}^{n}\in\{0,1\}^{N-i}}W(\mathbf{y|\mathbf{x}})$.
SC decoding remains attractive due to its low complexity and suitability for hardware implementations. It also serves as the foundation for more powerful variants such as the SCL decoder, which is our focus in the proposed protocol.

% \subsection{Log-Likelihood Ratio}

\section{The Proposed $(\epsilon,\delta)$-LDP Coding-Assisted Succinct Histogram Protocol}
\label{sec:protocol}
In this section, we first illustrate our proposed protocol. Next, we analyze the $(\epsilon,\delta)$-LDP guarantees, followed by the frequency estimation error analysis.
\subsection{The Proposed Protocol}
We consider $N$ clients with a single untrusted server that collects information from all clients for the unique heavy hitter problem in Definition~\ref{def:unique}. 
Client $i$ either holds the true item as a $k$-bit binary vector $\mathbf{u}^*\in\mathcal{U}=\{0,1\}^k$ or has no item as $\bot$, denoted by $\mathbf{u}_i\in\{\mathbf{u}^*\}\cup \{\bot\}$, for $i\in\{1,\dots,N\} \overset{\mathrm{def}}{=}[N]$.
We denote the set of clients that hold the unique item $\mathbf{u}^*$ as $\mathcal{T}$, so we have
\begin{equation}
    \mathbf{u}_i=\begin{cases}
        \mathbf{u}^* & \textnormal{if } i\in\mathcal{T}\\\bot & \textnormal{if } i\in[N]\setminus\mathcal{T}    \end{cases}
\end{equation}
for $i\in[N]$.
According to Definition~\ref{def:unique}, at least $\eta$ fraction of the clients have the unique item $\mathbf{u}^*$. 
Thus, we define the \textit{true frequency} of such item as $f(\mathbf{u}^*) \overset{\mathrm{def}}{=}\frac{|\mathcal{T}|}{N}\geq\eta$. 

Clients $i\in\mathcal{T}$ who hold the unique item $\mathbf{u}_i=\mathbf{u}^*$ encode $\mathbf{u}_i$ by the given $(n,k)$-polar code with a generator matrix $\mathbf{G}$ to obtain the corresponding codeword $\mathbf{v}_i=\mathbf{u}_i\mathbf{G}=\mathbf{u}^*\mathbf{G}$, then modulate the codeword by a normalized BPSK, which has the mapping $\{0,1\}^n\to\{-\frac{1}{\sqrt{n}},\frac{1}{\sqrt{n}}\}^n$, to generate $q(\mathbf{u}_i)=\mathbf{x}_i=\frac{1}{\sqrt{n}}(2\mathbf{v}_i-\mathbf{1}_n)=\frac{1}{\sqrt{n}}(2\mathbf{u}^*\mathbf{G}-\mathbf{1}_n)$;
for clients $i\in[N]\setminus\mathcal{T}$ who do not have the unique item, i.e. $\mathbf{u}_i=\bot$, set $q(\mathbf{u}_i)=\mathbf{x}_i=\mathbf{0}_n$. 
Thus, we summarize $q(\mathbf{u}_i)=\mathbf{x}_i$ as follows:
\begin{equation}
q(\mathbf{u}_i)=\mathbf{x}_i=
\begin{cases}\frac{1}{\sqrt{n}}(2\mathbf{u}^*\mathbf{G-\mathbf{1}_n}) & \textnormal{if }\mathbf{u}_i=\mathbf{u}^*\\\mathbf{0}_n& \textnormal{if }\mathbf{u}_i=\bot    \end{cases}
\end{equation}
for $i\in[N]$.
Next, client $i$ applies the randomized mechanism $\mathcal{R}_i$, that satisfies $(\epsilon,\delta)$-LDP, to obtain $\mathbf{z}_i=\mathcal{R}_i(\mathbf{u}_i) \overset{\mathrm{def}}{=} q(\mathbf{u}_i)+\mathbf{n}_i=\mathbf{x}_i+\mathbf{n}_i$, where $\mathbf{n}_i$ is a Gaussian noise vector that all entries are sampled from a zero mean Gaussian distribution $\mathcal{N}(0,\sigma^2)$, then sends $\mathbf{z}_i$ to the server, for $i\in[N]$. 
We leave the detailed setup of the randomized mechanism $\mathcal{R}_i$ to the next subsection.

The server collects $\mathbf{z}_i$'s from clients $i\in[N]$, then computes the mean $\mathbf{y}=\frac{1}{N}\sum_{i\in[N]}\mathbf{z}_i$. 
% The estimated item $\hat{\mathbf{u}}$ is obtained by a decoder such that $\hat{\mathbf{u}}=\mathrm{Dec}_{\texttt{type}}(\Tilde{\mathbf{y}})$, where $\texttt{type}$ represents the type of decoder and $\Tilde{\mathbf{y}}$ is the processed $\mathbf{y}$. In our protocol, we consider SC decoder that inputs the LLR as $\Tilde{\mathbf{y}}=\frac{1}{2\sigma}$
% $\mathbf{y}$ $\hat{\mathbf{u}}=\mathrm{Dec}_{\texttt{SC}}(\Tilde{\mathbf{y}})$
An SCL decoder is applied to decode the estimated item $\hat{\mathbf{u}}$ based on the LLR $\Tilde{\mathbf{y}} = \frac{2}{\sigma^2}\mathbf{y}$ such that $\hat{\mathbf{u}}=\mathrm{Dec}(\Tilde{\mathbf{y}})$. 
% To generalize the use of the notation of decoders, let
The frequency is estimated by calculating $\hat{f}(\hat{\mathbf{u}})=\langle q(\hat{\mathbf{u}}),\mathbf{y} \rangle=\langle \hat{\mathbf{x}},\mathbf{y}\rangle$, where $q(\hat{\mathbf{u}})=\hat{\mathbf{x}}=\frac{1}{\sqrt{n}}(2\hat{\mathbf{v}}-\mathbf{1}_n)=\frac{1}{\sqrt{n}}(2\hat{\mathbf{u}}\mathbf{G}-\mathbf{1}_n)$. Finally, the succinct histogram is generated as $\left(\hat{\mathbf{u}},\hat{f}(\hat{\mathbf{u}})\right)$, which includes the estimated item and the estimated frequency, respectively. 
We summarize the protocol in Fig.~\ref{fig:sys}.

\begin{figure}[t]
\centering
{\includegraphics[width=.5\textwidth]{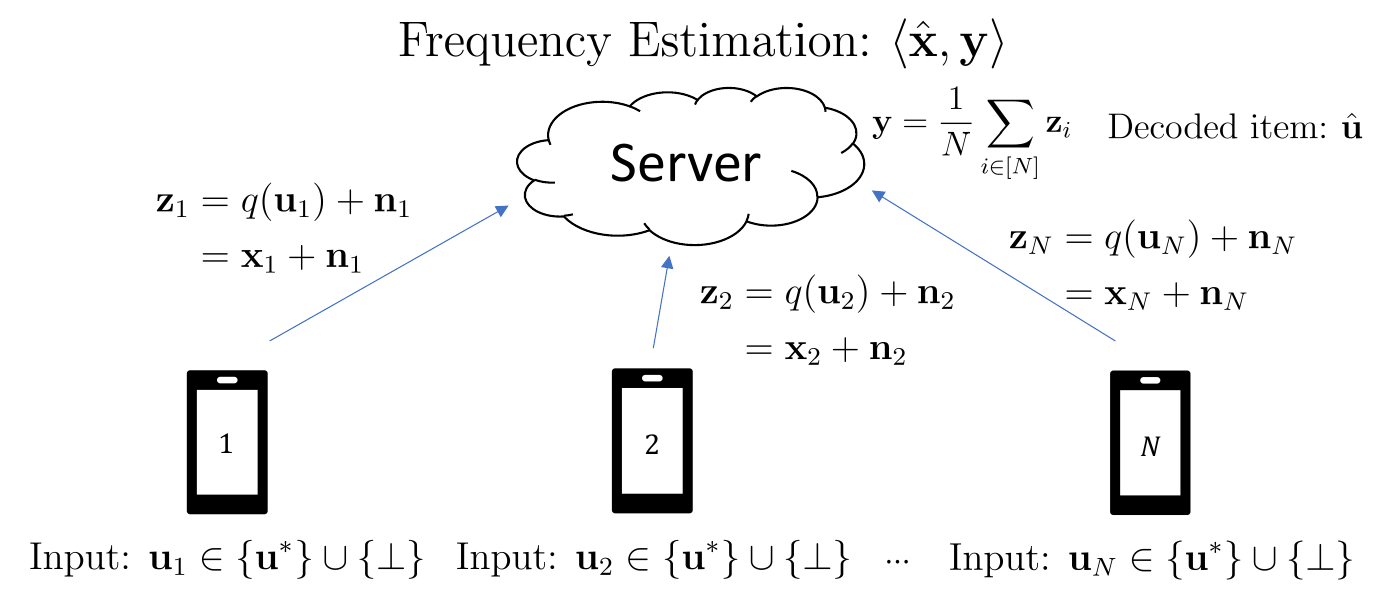}}\\
  \caption{Proposed Coding-Assisted Succinct Histogram Protocol}\label{fig:sys}
\end{figure}

\subsection{The $(\epsilon,\delta)$-LDP guarantee}
To ensure $(\epsilon,\delta)$-LDP by adding perturbations sampled from a zero mean Gaussian distribution $\mathcal{N}(0,\sigma^2)$, we must identify the variance $\sigma^2$ of such distributions. 
We start by analyzing the sensitivity of the proposed protocol as follows.

\begin{lemma}[Sensitivity]
    The sensitivity of our proposed protocol is bounded as follows:
    \begin{equation}
        \Delta=\max_{\mathbf{u^*},\hat{\mathbf{u}}}\, \lVert q(\mathbf{u}^*) - q(\hat{\mathbf{u}}) \rVert_2 = 2.
    \end{equation}
\end{lemma}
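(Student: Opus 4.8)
The plan is to reduce the claim to the geometry of normalized BPSK points and then invoke one structural property of the polar code. Since neighbouring datasets differ in a single client's record, $\Delta$ equals the largest $\ell_2$ distance between the possible outputs of $q$ on one client, i.e. over all pairs drawn from $\{q(\mathbf{u}):\mathbf{u}\in\{0,1\}^k\}\cup\{q(\bot)\}$. By construction $q(\bot)=\mathbf{0}_n$, while for an item $\mathbf{u}$ the word $q(\mathbf{u})=\tfrac{1}{\sqrt n}(2\mathbf{u}\mathbf{G}-\mathbf{1}_n)$ has every coordinate equal to $\pm\tfrac{1}{\sqrt n}$, so $\lVert q(\mathbf{u})\rVert_2=\sqrt{n\cdot\tfrac1n}=1$. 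I would then split the argument into a matching upper and lower bound.

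\emph{Upper bound.} For any two client inputs $a,b$ the triangle inequality gives $\lVert q(a)-q(b)\rVert_2\le \lVert q(a)\rVert_2+\lVert q(b)\rVert_2\le 1+1=2$, using that every output of $q$ has norm at most $1$. The same bound also follows coordinate-wise: each entry of $q(a)-q(b)$ lies in $\{0,\pm\tfrac{1}{\sqrt n},\pm\tfrac{2}{\sqrt n}\}$, hence $\lVert q(a)-q(b)\rVert_2^2\le n\cdot\tfrac4n=4$. In particular, if one of the inputs is $\bot$ the distance is exactly $1$, so the maximum can only be attained between two genuine items, which is why restricting the maximization to $\mathbf{u}^*,\hat{\mathbf{u}}$ as in the statement loses nothing.

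\emph{Achievability.} Equality in the triangle inequality forces $q(a)=-q(b)$ with both nonzero, i.e. two items whose codewords are bitwise complementary; their BPSK images are then the antipodal pair $\pm\tfrac{1}{\sqrt n}\mathbf{1}_n$, at distance $\tfrac{2}{\sqrt n}\lVert\mathbf{1}_n\rVert_2=2$. To exhibit such a pair it suffices that the all-ones word $\mathbf{1}_n$ be a codeword of the $(n,k)$ polar code: then the zero message gives $q(\mathbf{0}_k)=-\tfrac{1}{\sqrt n}\mathbf{1}_n$, the message whose codeword equals $\mathbf{1}_n$ gives $\tfrac{1}{\sqrt n}\mathbf{1}_n$, and their $\ell_2$ distance is $2$. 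That $\mathbf{1}_n$ is a codeword follows from the Kronecker structure of $\mathbf{G}=\mathbf{G}_2^{\otimes m}$: its last row equals $[\,1\ \ 1\,]^{\otimes m}=\mathbf{1}_n$, and that row corresponds to the most reliable synthetic bit-channel, which is an information (non-frozen) coordinate whenever $k\ge1$. Combining the two bounds gives $\Delta=2$.

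\emph{Main obstacle.} The only non-routine step is the claim that $\mathbf{1}_n$ is retained as a codeword; this is precisely where the specific polar construction enters (the Kronecker power together with the polarization ordering of bit-channels). If one prefers to stay agnostic about the exact frozen set, the argument can instead be phrased as: any polar code with $k\ge1$ includes the top bit-channel, equivalently a weight-$n$ codeword, and this is the unique configuration realizing the triangle-inequality bound — the remainder being the elementary norm computations above.
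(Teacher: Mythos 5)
Your proof is correct and lands on the same two pillars as the paper's: the distance between any two BPSK images is at most $2$, and equality is realized because $\mathbf{1}_n$ is a polar codeword. The routes differ slightly in how each pillar is established. For the upper bound, the paper writes $q(\mathbf{u}^*)-q(\hat{\mathbf{u}})=\frac{2}{\sqrt{n}}(\mathbf{u}^*-\hat{\mathbf{u}})\mathbf{G}$ and bounds the norm of the difference codeword by that of $\mathbf{1}_n$, whereas you use the triangle inequality on the unit-norm vectors $q(\mathbf{u}^*)$ and $q(\hat{\mathbf{u}})$; your version is arguably cleaner, since the paper's substitution $(\mathbf{u}^*-\hat{\mathbf{u}})\mapsto(\mathbf{1}_k-\mathbf{0}_k)$ is asserted rather than justified as a monotonicity step. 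For achievability, the paper simply states that the all-one vector is a codeword, while you supply the missing justification (the last row of $\mathbf{G}_2^{\otimes m}$ is $\mathbf{1}_n$ and corresponds to the most reliable, hence unfrozen, bit-channel whenever $k\ge 1$) --- a fact the paper only implicitly confirms in its subsequent Remark about freezing that bit to reduce sensitivity. You also explicitly dispose of the $\bot$ input (distance exactly $1$ to any item), which the paper's maximization over $\mathbf{u}^*,\hat{\mathbf{u}}$ quietly omits. In short: same strategy, but your write-up closes two small gaps the paper leaves open.
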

\begin{proof}
    Given that $q(\mathbf{u}^*)=\mathbf{x}^*=\frac{1}{\sqrt{n}}\left(2\mathbf{u}^*\mathbf{G}-\mathbf{1}_n\right)$ and $q(\hat{\mathbf{u}})=\hat{\mathbf{x}}=\frac{1}{\sqrt{n}}\left(2\hat{\mathbf{u}}\mathbf{G}-\mathbf{1}_n\right)$, we have
    \begin{equation}
    \label{eq:sensitivity}
    \begin{aligned}
        \left\lVert q(\mathbf{u}^*)-q(\hat{\mathbf{u}})\right\rVert_2
        &=\left\lVert\frac{1}{\sqrt{n}}(2\mathbf{u}^*\mathbf{G}-\mathbf{1}_n)-\frac{1}{\sqrt{n}}(2\hat{\mathbf{u}}\mathbf{G}-\mathbf{1}_n)\right\rVert_2\\
        &=\left\lVert \frac{2}{\sqrt{n}}(\mathbf{u}^*-\hat{\mathbf{u}})\mathbf{G} \right\rVert_2\\
        &\leq \left\lVert \frac{2}{\sqrt{n}}(\mathbf{1}_k-\mathbf{0}_k)\mathbf{G} \right\rVert_2\\
        &=\left\lVert \frac{2}{\sqrt{n}}\cdot\mathbf{1}_n \right\rVert_2=2.
    \end{aligned}
    \end{equation}
    Thus, the maximum value of $\lVert q(\mathbf{u}^*) - q(\hat{\mathbf{u}}) \rVert_2$ given any $\mathbf{u}^*,\hat{\mathbf{u}}\in\mathcal{U}$, which is the sensitivity, is $\Delta=2$. Also, this maximum occurs since the all-one vector is a codeword in the polar codebook.
\end{proof}

\begin{remark}
    We can reduce the sensitivity of our protocol, as stated in in \eqref{eq:sensitivity}, by carefully adjusting the code structure. This can be done by simply setting some of the information bits in the given $(n,k)$-polar code to zeros. 
    For instance, one can exclude the all-one codeword by setting the information bit corresponding to the last index to zero. 
    Thus, the sensitivity becomes $\Delta=\left\lVert \frac{2}{\sqrt{n}}\cdot\mathbf{c}_{\textnormal{max-weight} }\right\rVert_2<2$, where $\mathbf{c}_{\textnormal{max-weight} }$ is the codeword with the maximum weight excluding the all-one codeword $\mathbf{1}_n$. This gives rise to an interesting problem, i.e., minimizing the maximum weight of codewords, for polar codes which we leave for future work.
\end{remark}

With the sensitivity $\Delta$, to ensure $(\epsilon,\delta)$-LDP of our protocol, we employ the Analytic Gaussian mechanism \cite{balle2018improving} that achieves the optimal noise variance. This leads to the following theorem with a proof that follows from the $(\epsilon,\delta)$-LDP guarantee of the underlying analytic Gaussian mechanism. 

% \begin{theorem}[Analytic Gaussian mechanism \cite{balle2018improving}]
% \label{thm:AGM}
% For any $\epsilon>0$ and $\delta\in[0,1]$, the Gaussian mechanism is $(\epsilon,\delta)$-LDP if and only if $
%     \Phi\left(\frac{\Delta}{2\sigma}-\frac{\epsilon\sigma}{\Delta}\right)-e^\epsilon \Phi\left(-\frac{\Delta}{2\sigma}-\frac{\epsilon\sigma}{\Delta} \right)\leq\delta$,
% where $\Phi(\cdot)$ represents the cumulative distribution function of the normal distribution. 
% % The procedure is in Algorithm~\ref{alg:AnalyticGM}.
% \end{theorem}

% Theorem~\ref{thm:AGM} optimizes the minimum value of $\sigma^2$ that guarantees $(\epsilon,\delta)$-LDP with $\epsilon>0$ and $\delta\in[0,1]$.
% We denote it as a function that  $\sigma^2=\textsc{AnalyticGaussian}(\epsilon,\delta,\Delta)$.
% Thus, the analytic Gaussian mechanism ensures that our protocol satisfies $(\epsilon,\delta)$-LDP.
\begin{theorem}[$(\epsilon,\delta)$-LDP for the proposed protocol]
\label{thm:agm_CASH}
% Consider client $i$ encoding its local item $\mathbf{u}_i$ with a generator matrix $\mathbf{G}$ then modulated by BPSK to obtain $\mathbf{x}_i$, for $i\in[N]$. 
The protocol is $(\epsilon,\delta)$-LDP with the analytic Gaussian mechanism as the randomized mechanism $\mathcal{R}_i$, such that
$\mathcal{R}_i(\mathbf{u}_i) =q(\mathbf{u}_i)+\mathbf{n}_i= \mathbf{x}_i+\mathbf{n}_i$,
where $\mathbf{n}_i$ is a noise vector sampled from $\mathcal{N}(0,\sigma^2)$, which is a zero-mean Gaussian distribution with variance $\sigma^2=\textsc{AnalyticGaussian}(\epsilon,\delta,\Delta)$, for $i\in[N]$.
\end{theorem}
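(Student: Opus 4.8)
The plan is to reduce the statement to the $(\epsilon,\delta)$-differential privacy guarantee of the analytic Gaussian mechanism of \cite{balle2018improving}, applied independently on each client's device. Since local differential privacy is a property of a single client's randomizer in isolation, it suffices to show that for every $i\in[N]$ the map $\mathbf{u}_i\mapsto\mathcal{R}_i(\mathbf{u}_i)$ satisfies the inequality of Definition~\ref{def:LDP} for every pair of inputs a single client could hold, i.e.\ for any neighboring pair drawn from $\mathcal{U}\cup\{\bot\}$; here each client plays the role of a size-one database, so ``neighboring datasets'' reduces to ``any two admissible inputs.''

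First I would observe that $\mathcal{R}_i(\mathbf{u}_i)=q(\mathbf{u}_i)+\mathbf{n}_i$, with the entries of $\mathbf{n}_i$ drawn i.i.d.\ from $\mathcal{N}(0,\sigma^2)$, is exactly the vector-valued Gaussian mechanism of Definition~\ref{eq:gm} acting on the query $q$. Next I would verify that $\Delta=2$ is a valid $\ell_2$-sensitivity bound for $q$ over \emph{all} neighboring single-client inputs: the Sensitivity Lemma already handles distinct items $\mathbf{u},\mathbf{u}'\in\mathcal{U}$, giving $\lVert q(\mathbf{u})-q(\mathbf{u}')\rVert_2\le 2$, and the remaining case is $\lVert q(\mathbf{u}^*)-q(\bot)\rVert_2=\lVert\tfrac{1}{\sqrt{n}}(2\mathbf{u}^*\mathbf{G}-\mathbf{1}_n)\rVert_2=1\le 2$, since $2\mathbf{u}^*\mathbf{G}-\mathbf{1}_n\in\{-1,+1\}^n$. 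Hence $\Delta=2$ upper-bounds the sensitivity of $q$ for every pair of admissible client inputs.

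Then I would invoke the main result of \cite{balle2018improving}: the calibration procedure $\textsc{AnalyticGaussian}(\epsilon,\delta,\Delta)$ returns the variance $\sigma^2$ for which the Gaussian mechanism on a query of $\ell_2$-sensitivity $\Delta$ is $(\epsilon,\delta)$-DP, using the exact characterization $\Phi\!\left(\tfrac{\Delta}{2\sigma}-\tfrac{\epsilon\sigma}{\Delta}\right)-e^{\epsilon}\,\Phi\!\left(-\tfrac{\Delta}{2\sigma}-\tfrac{\epsilon\sigma}{\Delta}\right)\le\delta$, where $\Phi$ is the standard normal CDF, rather than the loose classical sufficient condition. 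Instantiating this with $\Delta=2$ yields that each $\mathcal{R}_i$ satisfies Definition~\ref{def:LDP}, i.e.\ is $(\epsilon,\delta)$-LDP; because the clients perturb their data independently, the whole protocol is $(\epsilon,\delta)$-LDP.

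The argument is essentially a check that the hypotheses of the analytic Gaussian mechanism theorem are met, so there is no substantial obstacle; the step I would be most careful about is confirming that the sensitivity bound $\Delta=2$ established for $q$ indeed covers the $\bot$ input as well (it does, with room to spare), so that the per-client randomizer inherits the mechanism's privacy guarantee verbatim in the local model.
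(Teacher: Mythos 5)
Your proposal is correct and follows essentially the same route as the paper, which simply asserts that the theorem ``follows from the $(\epsilon,\delta)$-LDP guarantee of the underlying analytic Gaussian mechanism'' applied with the sensitivity $\Delta=2$ from the preceding lemma. Your explicit verification that the $\bot$ input is also covered by the bound (since $\lVert q(\mathbf{u}^*)-q(\bot)\rVert_2=1\le 2$) is a worthwhile detail that the paper's sensitivity lemma, which only ranges over $\mathbf{u}^*,\hat{\mathbf{u}}\in\mathcal{U}$, leaves implicit.
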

% \begin{proof}
%     Analytic Gaussian mechanism satisfies $(\epsilon,\delta)$-LDP
% \end{proof}

With Theorem~\ref{thm:agm_CASH}, we summarize our protocol in Algorithm~\ref{alg:LDPCASH}.

% \begin{theorem}[$(\epsilon,\delta)$-LDP]
%     The proposed construction in Algorithm~\ref{alg:LDPCASH} is $(\epsilon,\delta)$-LDP with $\mathcal{R}_i(\mathbf{x}_i,\epsilon)$ adds a Gaussian noise vector that each entry is sampled from $\mathcal{N}(0,\sigma^2)$, where $\sigma^2=$
% \end{theorem}

\begin{algorithm}[t]
    \caption{$(\epsilon,\delta)$-LDP Coding-Assisted Succinct Histogram Protocol for the Unique Heavy Hitter Problem}\label{alg:LDPCASH}
    \begin{algorithmic}[1]
    % \REQUIRE a
    \renewcommand{\algorithmicrequire}{\textbf{Input:}}
    \renewcommand{\algorithmicensure}{\textbf{Output:}}
    \REQUIRE Clients' inputs $\{\mathbf{u}_i\in\mathbf{u}^*\cup\bot:i\in[N]\}$, privacy guarantee $\epsilon$ and $\delta$, sensitivity $\Delta$, and generator matrix $\mathbf{G}$ constructed by an $(n,k)$-polar code
    \ENSURE Succinct histogram $\left(\hat{\mathbf{u}},\hat{f}(\hat{\mathbf{u}})\right)$
    
% \vspace{0.3em}
%     \textbf{Clients:}
    \FOR{Client $i\in[N]$}
        \STATE If $\mathbf{u}_i\neq\bot$, client $i$ encodes then modulates its item $\mathbf{x}_i=\frac{1}{\sqrt{n}}(2\mathbf{v}_i-\mathbf{1}_n)=\frac{1}{\sqrt{n}}(2\mathbf{u}_i\mathbf{G}-\mathbf{1}_n)$. Else, $\mathbf{x}_i=\mathbf{0}_n$.
        \STATE Client $i$ randomized its private report $\mathbf{z}_i=\mathcal{R}_i\left(\mathbf{x}_i\right)=\mathbf{x}_i+\mathbf{n}_i$, where $\mathbf{n}_i$ is a vector that all entries are sampled from a zero mean Gaussian distribution $\mathcal{N}(0,\sigma^2)$ with $\sigma^2=\mathrm{AnalyticGaussian}(\epsilon,\delta,\Delta).$
        \STATE Client $i$ sends $\mathbf{z}_i$ to the server.
    \ENDFOR
    \STATE Server collects $\mathbf{z}_i$'s from clients $i\in[N]$.
    \STATE Server computes $\mathbf{y}=\frac{1}{N}\sum_{i\in[N]}\mathbf{z}_i$.
    \STATE Server calculates the LLR $\Tilde{\mathbf{y}}=\frac{2}{\sigma^2}\mathbf{y}$.
    \STATE Server decodes $\Tilde{\mathbf{y}}$ into an estimate for the unique item $\hat{\mathbf{u}}=\mathrm{Dec}(\Tilde{\mathbf{y}})$ by an SCL decoder.
    \STATE Server encodes then modulates its decoded item $\hat{\mathbf{x}}=\frac{1}{\sqrt{n}}(2\hat{\mathbf{v}}-\mathbf{1}_n)=\frac{1}{\sqrt{n}}(2\hat{\mathbf{u}}\mathbf{G}-\mathbf{1}_n)$.
    \STATE Server computes a frequency estimate $\hat{f}(\hat{\mathbf{u}})=\langle \hat{\mathbf{x}},\mathbf{y}\rangle$.
    \STATE \textbf{Return} the succinct histogram $\left(\hat{\mathbf{u}},\hat{f}(\hat{\mathbf{u}})\right)$
    \end{algorithmic}
\end{algorithm}
\begin{remark}
The major differences between our proposed protocol Algorithm~\ref{alg:LDPCASH} compared to the most relevant prior work \prot in \cite[Algorithm 2]{bassily2015local} are as follows: 
(1) We formulate the practical code construction for encoding the unique item to the codewords in step 2, which \prot did not specify the construction at all;
(2) The randomizers between our protocol and \prot are fundamentally different. \prot randomly picks the $j$th entry from $\mathbf{x}_i$, which is denoted by $x_{i,j}$, randomizes it to either $c_\epsilon n x_{i,j}$ or $-c_\epsilon n x_{i,j}$ according to the randomized response \cite{Warner01031965} when $\mathbf{x}_i\neq\mathbf{0}_n$, or assigning $c_\epsilon\sqrt{n}$ or $-c_\epsilon\sqrt{n}$ uniformly at random, while other entries are set to zeros. Our proposed protocol randomizes each entry by adding Gaussian noise.
% (1) We formulate the practical transformation of the unique item to the codewords in step 2, by defining the dimension of the unique item $\mathbf{u}_i$, the chosen binary linear block code with a generator matrix $\mathbf{G}$, that encodes the unique item to a codeword $\mathbf{v}_i=\mathbf{u}_i\mathbf{G}$, for $i\in[N]$, which \prot did not at all; 
(3) Our protocol enables flexibility in the soft-decision process by adding perturbations to the queries by the analytic Gaussian mechanism in step 3, that satisfy $(\epsilon,\delta)$-LDP with optimal noise. In \protns, a local randomizer inspired by randomized response was applied to perturb the queries, making it fundamentally binarized to a hard-decision process;
(4) The SCL decoder used in step 9 is a soft decoder, while \prot rounds $\mathbf{y}$ in our step 7 to $(-\frac{1}{\sqrt{n}},\frac{1}{\sqrt{n}})^n$, which is limited to a hard decoder. Section~\ref{sec:sim} demonstrates the significance of the decoder, especially when it comes to a low true frequency.
\end{remark}

\subsection{Error Analysis}
We analyze the error in the frequency estimation of the correctly decoded item, i.e., $\hat{\mathbf{u}}=\mathbf{u}^*$. The following lemma derives the expression of $\mathbf{y}$, which is the mean of collected $\mathbf{z}_i$'s from clients $i\in[N]$.
\begin{lemma}
\label{lem:y}
 The received vector at the server in step 7 in Algorithm~\ref{alg:LDPCASH} is 
 \begin{equation}
     \mathbf{y}=\frac{1}{N}\sum_{i\in[N]}\mathbf{z}_i=f(\mathbf{u}^*)\mathbf{x}^*+\frac{1}{N}\sum_{i\in[N]}\mathbf{n}_i,
 \end{equation}
\end{lemma}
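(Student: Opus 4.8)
The plan is to unfold the definition of $\mathbf{z}_i$ and split the sum over clients into the two disjoint groups $\mathcal{T}$ and $[N]\setminus\mathcal{T}$. Recall from the protocol description that $\mathbf{z}_i=q(\mathbf{u}_i)+\mathbf{n}_i=\mathbf{x}_i+\mathbf{n}_i$, where $\mathbf{x}_i=\mathbf{x}^*=\frac{1}{\sqrt{n}}(2\mathbf{u}^*\mathbf{G}-\mathbf{1}_n)$ for $i\in\mathcal{T}$ and $\mathbf{x}_i=\mathbf{0}_n$ for $i\in[N]\setminus\mathcal{T}$. So first I would write
\begin{equation}
\mathbf{y}=\frac{1}{N}\sum_{i\in[N]}\mathbf{z}_i=\frac{1}{N}\sum_{i\in[N]}\mathbf{x}_i+\frac{1}{N}\sum_{i\in[N]}\mathbf{n}_i.
\end{equation}

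Next I would evaluate the first sum using the case split: since $\mathbf{x}_i$ contributes $\mathbf{x}^*$ exactly when $i\in\mathcal{T}$ and $\mathbf{0}_n$ otherwise,
\begin{equation}
\frac{1}{N}\sum_{i\in[N]}\mathbf{x}_i=\frac{1}{N}\sum_{i\in\mathcal{T}}\mathbf{x}^*+\frac{1}{N}\sum_{i\in[N]\setminus\mathcal{T}}\mathbf{0}_n=\frac{|\mathcal{T}|}{N}\,\mathbf{x}^*.
\end{equation}
Then I would invoke the definition of the true frequency $f(\mathbf{u}^*)\overset{\mathrm{def}}{=}|\mathcal{T}|/N$ to rewrite $\frac{|\mathcal{T}|}{N}\mathbf{x}^*=f(\mathbf{u}^*)\mathbf{x}^*$, and combine with the (untouched) noise term $\frac{1}{N}\sum_{i\in[N]}\mathbf{n}_i$ to obtain the claimed identity. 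The noise term is left as is because it is simply carried through to the subsequent error analysis.

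This is essentially a bookkeeping argument, so I do not expect a genuine obstacle; the only point requiring a little care is being explicit that the noise vectors $\mathbf{n}_i$ are added for \emph{every} client $i\in[N]$, including those holding $\bot$ (whose clean query is $\mathbf{0}_n$ but who still transmit $\mathbf{n}_i$), so that the noise sum runs over all $N$ clients rather than only over $\mathcal{T}$. Beyond that, the proof is a two-line calculation.
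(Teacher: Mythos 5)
Your proposal is correct and follows essentially the same route as the paper's proof: unfold $\mathbf{z}_i=\mathbf{x}_i+\mathbf{n}_i$, split the signal sum over $\mathcal{T}$ and $[N]\setminus\mathcal{T}$, and apply the definition $f(\mathbf{u}^*)=|\mathcal{T}|/N$. Your explicit remark that every client (including those holding $\bot$) contributes a noise vector is a nice clarification but does not change the argument.
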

\begin{proof}
Given that $\mathbf{z}_i=\mathbf{x}_i+\mathbf{n}_i$ for clients $i\in[N]$, we have 
    \begin{equation}
    \label{eq:y_sub1}
        \begin{aligned}
            \mathbf{y}&=\frac{1}{N}\sum_{i\in[N]}\mathbf{z}_i=\frac{1}{N}\sum_{i\in[N]}\mathbf{x}_i+\frac{1}{N}\sum_{i\in[N]}\mathbf{n}_i\\
        \end{aligned}.
    \end{equation}
Then, note that $\mathbf{x}_i=\mathbf{x}^*$ for $i\in\mathcal{T}$ and $\mathbf{x}_i=\mathbf{0}_i$ for $i\in[N]\setminus\mathcal{T}$.
    \begin{equation}
    \label{eq:y_sub2}
        \begin{aligned}
            \frac{1}{N}\sum_{i\in[N]}\mathbf{x}_i
            &=\frac{1}{N}\left[\sum_{i\in\mathcal{T}}\mathbf{x}_i+\sum_{i\in[N]\setminus\mathcal{T}}\mathbf{x}_i\right]\\
            &=\frac{1}{N}\left[|\mathcal{T}|\cdot\mathbf{x}^*+\left(N-|\mathcal{T}|\right)\cdot\mathbf{0}_n\right]\\
            &=\frac{|\mathcal{T}|}{N}\mathbf{x}^*= f(\mathbf{u}^*)\mathbf{x}^*.
        \end{aligned}
    \end{equation}
    Combining \eqref{eq:y_sub1} and \eqref{eq:y_sub2} proves the statement.
    % \begin{equation}
    %     \begin{aligned}
    %         \mathbf{y}&=\frac{1}{N}\sum_{i\in[N]}\mathbf{z}_i=\frac{1}{N}\sum_{i\in[N]}\mathbf{x}_i+\frac{1}{N}\sum_{i\in[N]}\mathbf{n}_i\\
    %         &=\frac{1}{N}\left[\sum_{i\in\mathcal{T}}\mathbf{x}_i+\sum_{i\in[N]\setminus\mathcal{T}}\mathbf{0}_n\right]+\frac{1}{N}\sum_{i\in[N]}\mathbf{n}_i\\&=\frac{\left|\mathcal{T}\right|}{N}\mathbf{x}^* + \frac{1}{N} \sum_{i\in[N]}\mathbf{n}_i\\
    %         &= f(\mathbf{u}^*)\mathbf{x}^*+\frac{1}{N}\sum_{i\in[N]}\mathbf{n}_i
    %     \end{aligned}
    % \end{equation}
\end{proof}

% \begin{lemma}
%      Let the estimated frequency of the item be $\hat{f}(\hat{\mathbf{u}}) = \langle\frac{1}{2w}\hat{\mathbf{u}}\mathbf{G},\mathbf{y+1}_n\rangle$. When the protocol
% \end{lemma}
% \begin{proof}
%     When the protocol is noiseless, then $\mathbf{n}_i=\mathbf{0}_n$, for $i\in[N]$. Furthermore, we always have a correct decoding, i.e., $\hat{\mathbf{u}}=\mathbf{u}^*$. Thus, we have
%     \begin{equation}
%         \begin{aligned}
%             \langle\frac{1}{2w}\hat{\mathbf{u}}\mathbf{G},\mathbf{y+1}_n\rangle=\langle\frac{1}{2w}\mathbf{u}^*\mathbf{G},\mathbf{1}_n\rangle
%         \end{aligned}
%     \end{equation}
% \end{proof}

\begin{lemma}[Error analysis for the correctly decoded items]
\label{lem:Error Bound}
    Assuming the unique item is correctly decoded, i.e., $\hat{\mathbf{u}}=\mathbf{u}^*$, the frequency estimation error is expressed as follows:
    \begin{equation}
        \mathrm{Err}_{\hat{\mathbf{u}},\mathbf{u}^*}\overset{\mathrm{def}}{=}\left|\hat{f}(\hat{\mathbf{u}})-f(\mathbf{u^*})\right|=\left|\langle\mathbf{x}^*,\frac{1}{N}\sum_{i\in[N]}\mathbf{n}_i\rangle\right| .
    \end{equation}
    % \begin{equation}
    % \max_{\mathbf{u^*}, \hat{\mathbf{u}}}
    % |\hat{f}(\hat{\mathbf{u}})-f(\mathbf{u^*})| = \left|\frac{1}{2wN}\langle \hat{\mathbf{v}},\sum_{i=1}^{N}\mathbf{n}_i\rangle\right|,
    % \end{equation}
    % and $\hat{\mathbf{v}}=\hat{\mathbf{u}}\mathbf{G}$.
\end{lemma}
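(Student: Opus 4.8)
The plan is to substitute the relevant definitions and exploit the bilinearity of the inner product; no analytic estimate is actually needed, since the statement is an exact identity. First I would use the hypothesis $\hat{\mathbf{u}}=\mathbf{u}^*$ to conclude $\hat{\mathbf{x}}=q(\hat{\mathbf{u}})=q(\mathbf{u}^*)=\mathbf{x}^*$, so that the frequency estimate from step 11 of Algorithm~\ref{alg:LDPCASH} becomes $\hat{f}(\hat{\mathbf{u}})=\langle\hat{\mathbf{x}},\mathbf{y}\rangle=\langle\mathbf{x}^*,\mathbf{y}\rangle$. Next I would invoke Lemma~\ref{lem:y} to replace $\mathbf{y}$ by $f(\mathbf{u}^*)\mathbf{x}^*+\frac{1}{N}\sum_{i\in[N]}\mathbf{n}_i$, and then split the inner product linearly into $f(\mathbf{u}^*)\langle\mathbf{x}^*,\mathbf{x}^*\rangle+\langle\mathbf{x}^*,\frac{1}{N}\sum_{i\in[N]}\mathbf{n}_i\rangle$.

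The one fact that is not purely formal is that $\langle\mathbf{x}^*,\mathbf{x}^*\rangle=\lVert\mathbf{x}^*\rVert_2^2=1$, and this is exactly where the \emph{normalized} BPSK mapping $\{0,1\}^n\to\{-\tfrac{1}{\sqrt{n}},\tfrac{1}{\sqrt{n}}\}^n$ is used: every one of the $n$ coordinates of $\mathbf{x}^*=\frac{1}{\sqrt{n}}(2\mathbf{u}^*\mathbf{G}-\mathbf{1}_n)$ has magnitude $\tfrac{1}{\sqrt{n}}$, hence squared magnitude $\tfrac{1}{n}$, and summing over the $n$ coordinates gives $n\cdot\tfrac{1}{n}=1$. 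Substituting this back yields $\hat{f}(\hat{\mathbf{u}})=f(\mathbf{u}^*)+\langle\mathbf{x}^*,\frac{1}{N}\sum_{i\in[N]}\mathbf{n}_i\rangle$, so $\hat{f}(\hat{\mathbf{u}})-f(\mathbf{u}^*)=\langle\mathbf{x}^*,\frac{1}{N}\sum_{i\in[N]}\mathbf{n}_i\rangle$, and taking absolute values on both sides gives the claimed expression for $\mathrm{Err}_{\hat{\mathbf{u}},\mathbf{u}^*}$.

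There is essentially no genuine obstacle here—the statement is an algebraic identity, not an inequality—so the only care required is bookkeeping: tracking the $\tfrac{1}{\sqrt{n}}$ normalization through the norm computation, and applying Lemma~\ref{lem:y} with the correct averaging factor $\tfrac{1}{N}$ and sign. It would be natural, though beyond what this lemma asks, to follow up by observing that the residual term $\langle\mathbf{x}^*,\frac{1}{N}\sum_{i\in[N]}\mathbf{n}_i\rangle=\frac{1}{N}\sum_{i\in[N]}\langle\mathbf{x}^*,\mathbf{n}_i\rangle$ is a zero-mean Gaussian random variable, being an average of $N$ independent $\mathcal{N}(0,\sigma^2\lVert\mathbf{x}^*\rVert_2^2)=\mathcal{N}(0,\sigma^2)$ terms, with variance $\tfrac{\sigma^2}{N}$; this sets up the tail/concentration bound on the estimation error that one would state next.
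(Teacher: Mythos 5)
Your proposal is correct and follows essentially the same route as the paper's proof: substitute $\hat{\mathbf{x}}=\mathbf{x}^*$, apply Lemma~\ref{lem:y}, expand the inner product bilinearly, and use $\langle\mathbf{x}^*,\mathbf{x}^*\rangle=1$. The only differences are that you explicitly justify the unit-norm fact via the normalized BPSK mapping (which the paper merely asserts) and you add a forward-looking Gaussian-variance remark not required by the lemma.
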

\begin{proof}
When we have the correct decoded item, we can substitute $\hat{\mathbf{u}}$ by $\mathbf{u}^*$, thus the problem becomes $\mathrm{Err}_{\hat{\mathbf{u}}=\mathbf{u}^*}=\left|\hat{f}({\mathbf{u}}^*)-f(\mathbf{u^*})\right|$. 
Given the frequency estimate $\hat{f}(\hat{\mathbf{u}})=\langle\hat{\mathbf{x}},\mathbf{y}\rangle$ in the protocol, plugging $\hat{\mathbf{u}}=\mathbf{u}^*$, we obtain that $\hat{f}({\mathbf{u}}^*)=\langle{\mathbf{x}^*},\mathbf{y}\rangle$, where $\hat{\mathbf{x}}=\mathbf{x}^*$, yields $\mathrm{Err}_{\hat{\mathbf{u}},\mathbf{u}^*}=\left|\langle{\mathbf{x}^*},\mathbf{y}\rangle-f(\mathbf{u}^*)\right|$. Then, with Lemma~\ref{lem:y} we have
\begin{equation}
\begin{aligned}   \mathrm{Err}_{\hat{\mathbf{u}}=\mathbf{u}^*} =&\left|\langle{\mathbf{x}^*},f(\mathbf{u}^*)\mathbf{x}^*+\frac{1}{N}\sum_{i\in[N]}\mathbf{n}_i\rangle-f(\mathbf{u}^*)\right|\\
    =&\left|f(\mathbf{u}^*)\langle{\mathbf{x}^*},\mathbf{x}^*\rangle+\langle{\mathbf{x}^*},\frac{1}{N}\sum_{i\in[N]}\mathbf{n}_i\rangle-f(\mathbf{u}^*)\right|\\
    =&\left|\langle\mathbf{x}^*,\frac{1}{N}\sum_{i\in[N]}\mathbf{n}_i\rangle\right|,
    % =&\left|\langle\frac{1}{\sqrt{n}}\left(2\mathbf{v}^*-\mathbf{1}_n\right),\frac{1}{N}\sum_{i\in[N]}\mathbf{n}_i\rangle\right|\\
    % =&\left|\frac{2}{\sqrt{n}}\langle\mathbf{v}^*,\frac{1}{N}\sum_{i\in[N]}\mathbf{n}_i\rangle-\frac{1}{\sqrt{n}}\langle\mathbf{1}_n,\frac{1}{N}\sum_{i\in[N]}\mathbf{n}_i\rangle\right|\\
    % =&\left|\frac{2}{\sqrt{n}}\langle\mathbf{v}^*,\frac{1}{N}\sum_{i\in[N]}\mathbf{n}_i\rangle-\frac{1}{\sqrt{n}N}\sum_{j\in[n]}\sum_{i\in[N]}n_{i,j}\right|
\end{aligned}
\end{equation}
where $\langle\mathbf{x}^*,\mathbf{x}^*\rangle=1$.
\end{proof}

% \begin{lemma}[Error analysis for the incorrectly decoded items]
%     With probability of $\beta$, we have the unique item incorrectly decoded that $\hat{\mathbf{u}}=\mathbf{u}^*$, which provides a frequency estimation with an error as follows:
%     \begin{equation}
%         \mathrm{Err}_{\hat{\mathbf{u}}\neq\mathbf{u}^*}\overset{\mathrm{def}}{=}\left|\hat{f}(\hat{\mathbf{u}})-f(\mathbf{u^*})\right|=,
%     \end{equation}
%     % \begin{equation}
%     % \max_{\mathbf{u^*}, \hat{\mathbf{u}}}
%     % |\hat{f}(\hat{\mathbf{u}})-f(\mathbf{u^*})| = \left|\frac{1}{2wN}\langle \hat{\mathbf{v}},\sum_{i=1}^{N}\mathbf{n}_i\rangle\right|,
%     % \end{equation}
%     where $\beta$ is the BLER.
%     % and $\hat{\mathbf{v}}=\hat{\mathbf{u}}\mathbf{G}$.
% \end{lemma}

% \begin{theorem}
% \begin{equation}
%     \mathbb{E}\left[\left|\hat{f}(\hat{\mathbf{u}})-f(\mathbf{u^*})\right|\right]
% \end{equation}
% \end{theorem}

% \begin{conjecture}
%     In a high privacy regime, i.e., low $\epsilon$ values, correctly decoding the unique item implies $\langle\mathbf{x}^*,\frac{1}{N}\sum_{i\in[N]}\mathbf{n}_i\rangle>0$. Thus the frequency estimation is always overestimated.
% \end{conjecture}

\section{Simulation Results}
\label{sec:sim}
In this section, we analyze how true frequencies, codelength, and number of clients, affect the performance. 
% \subsection{Varies true frequency $f(\mathbf{u}^*)$}
We consider a $(64,8)$-polar code, which implies there are $2^{8}$ possible unique items, with the number of clients $N=1000$, and we analyze the true frequencies $f(\mathbf{u}^*) \in \{0.5,0.6,0.7\}$. The simulations vary $\epsilon$ and set $\delta=10^{-4}$.
To decode the estimated unique item, we consider the SC list (SCL) decoder \cite{tal2015list} with a list size $L=8$ for our protocol and the maximum likelihood decoder for \prot in \cite{bassily2015local}. 
The BLER and frequency estimation error comparisons to $\epsilon$ are demonstrated in Figs.~\ref{fig:varf_bler} and \ref{fig:varf_fqe}, respectively.

Figure~\ref{fig:varf_bler} demonstrates the robustness of the protocols that experience different true frequencies. 
When the true frequency is $f(\mathbf{u}^*)=0.7$, both protocols have good results, but \prot achieves a lower BLER due to the high performance of the maximum likelihood decoder. 
When the true frequency becomes $f(\mathbf{u}^*)=0.6$, our protocol still perform well, while \prot starts to suffer an error floor since the protocol rounds the entries in $\mathbf{y}$ to $\frac{1}{\sqrt{n}}$, if $y_j\geq0$ and to $-\frac{1}{\sqrt{n}}$, if $y_j<0$, for $j\in[n]$.
Intuitively, since lower true frequency $f(\mathbf{u}^*)$ means more clients have $\mathbf{x}_i=\mathbf{0}_n$, more entries tend to be rounded to $\frac{1}{\sqrt{n}}$, which gradually causes the entries in $\Tilde{\mathbf{y}}$ biased towards positive. 
This becomes worse when the true frequency is decreased to $f(\mathbf{u}^*)=0.5$, where most items are decoded incorrectly.
In Fig.~\ref{fig:varf_fqe}, one can observe that our frequency estimation outperforms \protns. 
These results indicate that our proposed protocol has versatile performance regarding the BLER given a wide range of true frequencies, equivalent to having a high correct decoding rate. 
Thus, a lower $\eta$ can be guaranteed such that our protocol is better than \prot according to Definition~\ref{def:unique}, which requires $\eta$ as small as possible.
Furthermore, the proposed protocol has a lower complexity of the decoder while achieving better frequency estimation errors.

% \subsection{Varies codelength $n$ and number of clients $N$}
In Fig.~\ref{fig:varNn_bler}, we evaluate the performance of our proposed protocol in terms of BLER for varying the codelength from $n=64$ to $n=256$ and the number of clients from $N=500$ to $N=1000$, given $\delta=10^{-4}$, $f(\mathbf{u}^*)=0.7$, and $k=8$. We can observe that increasing the codelength from $n=64$ to $n=256$ gives our protocol a performance gain for both $N=500$ and $N=1000$. Furthermore,  increasing the number of clients from $N=500$ to $N=1000$ gives us a performance gain, which can be observed from Lemma~\ref{lem:Error Bound}.

\begin{figure}[t]
\centering
{\includegraphics[width=.49\textwidth]{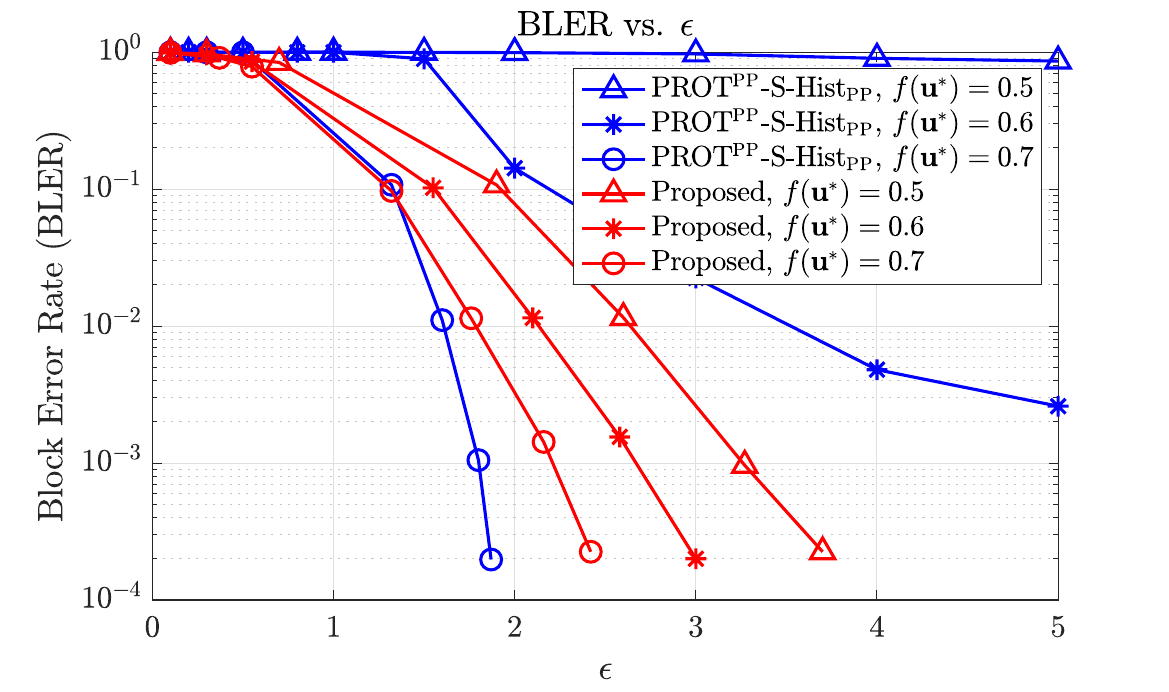}}\\
  \caption{BLER vs. $\epsilon$ for $(n,k)=(64,8)$ and $N=1000$}\label{fig:varf_bler}
\end{figure}

\begin{figure}[t]
\centering
{\includegraphics[width=.49\textwidth]{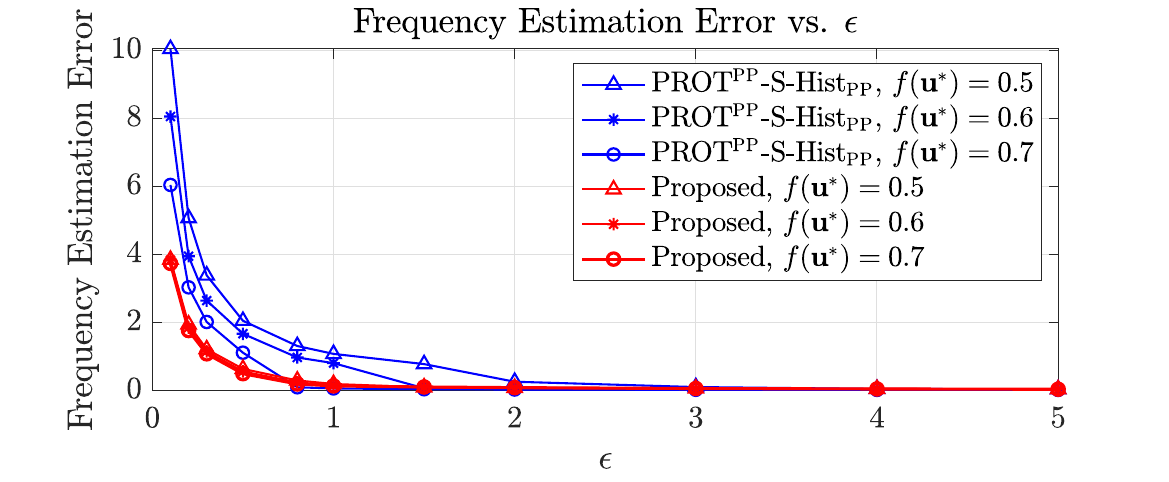}}\\
  \caption{Frequency Estimation Error vs. $\epsilon$ for $(n,k)=(64,8)$ and $N=1000$}\label{fig:varf_fqe}
\end{figure}

\begin{figure}[t]
\centering
{\includegraphics[width=.49\textwidth]{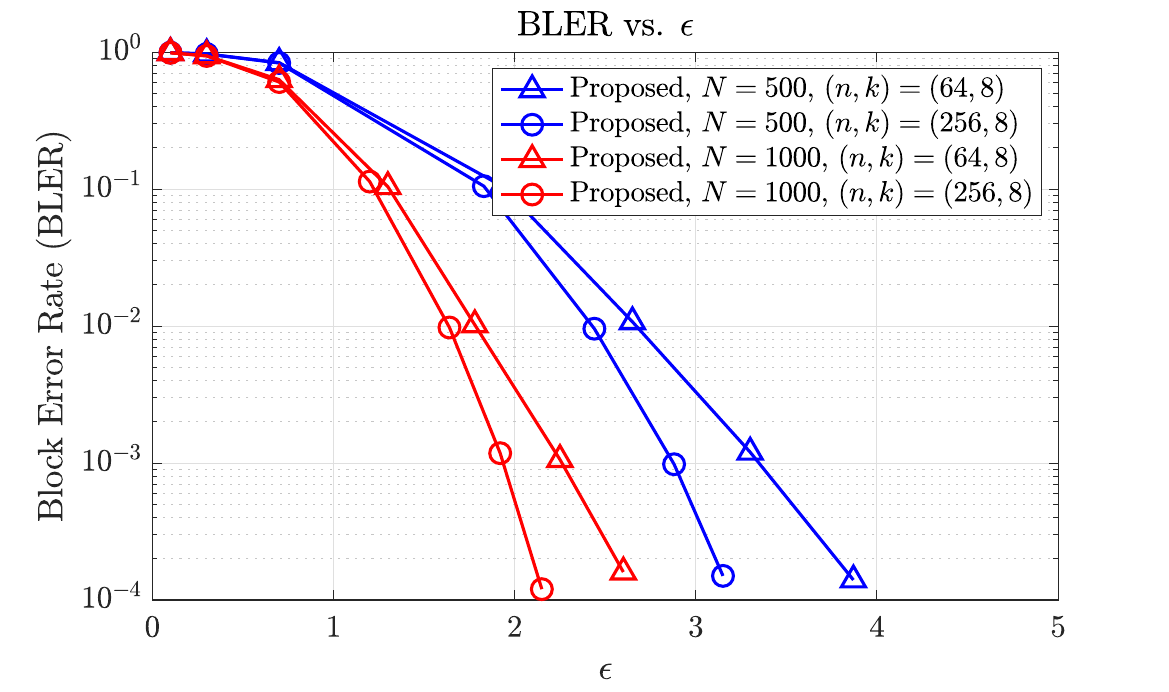}}\\
  \caption{BLER vs. $\epsilon$ for $f(\mathbf{u}^*)=0.7$ }\label{fig:varNn_bler}
\end{figure}

% \subsection{Succinct histogram with higher dimension}

\section{Conclusion}
\label{sec:con}

In this work, we introduced the first practically implementable protocol for constructing succinct histograms under $(\epsilon,\delta)$-LDP using error-correcting codes. By leveraging polar codes and an SCL decoder, our design supports soft decoding through an analytic Gaussian mechanism, enabling improved robustness without sacrificing frequency estimation accuracy. Unlike prior works that remained theoretical, we provide concrete constructions and experimental evaluations, demonstrating that our protocol outperforms across varied settings, while achieving similar frequency estimation errors, especially on lower true frequencies of the unique items.

\bibliographystyle{IEEEtran}
\bibliography{Paper}

% \newpage
% \appendix

\end{document}